\begin{document}
\title{ETGuard: Malicious Encrypted Traffic Detection in Blockchain-based Power Grid Systems }
\titlerunning{ETGuard}
%
\author{Peng Zhou\inst{1}\and
Yongdong Liu\inst{2, }\thanks{Yongdong Liu is the corresponding author. }\and
Lixun Ma\inst{2}\and
Weiye Zhang\inst{2}\and
Haohan Tan\inst{2}\and
Zhenguang Liu\inst{2}\and
Butian Huang\inst{2}}
\authorrunning{P. Zhou, Y. Liu, et al.}
%
\institute{State Grid Zhejiang Electric Power Company, LTD. Information and Communication Branch, China
\and
Zhejiang University, Hangzhou, Zhejiang Province, China\\
\email{\{malx\}@zju.edu.cn}}
\maketitle          
\begin{abstract}
The escalating prevalence of encryption protocols has led to a concomitant surge in the number of malicious attacks that hide in encrypted traffic. Power grid systems, as fundamental infrastructure, are becoming prime targets for such attacks. Conventional methods for detecting malicious encrypted packets typically use a \textit{static} pre-trained model. We observe that these methods are not well-suited for blockchain-based power grid systems. More critically, they fall short in \textit{dynamic} environments where new types of encrypted attacks continuously emerge. 

\quad\; Motivated by this, in this paper we try to tackle these challenges from two aspects: 
(1) We present a novel framework that is able to automatically detect malicious encrypted traffic in blockchain-based power grid systems 
and incrementally learn from new malicious traffic. 
(2) We mathematically derive incremental learning losses to resist the forgetting of old attack patterns while ensuring the model is capable of handling new encrypted attack patterns. 
Empirically, our method achieves state-of-the-art performance on three different benchmark datasets. 
We also constructed the first malicious encrypted traffic dataset for blockchain-based power grid scenario. 
Our code and dataset are available at https://github.
com/PPPmzt/ETGuard, hoping to inspire future research. 

\keywords{Blockchain \and Network security \and Malicious encrypted traffic detection \and Incremental learning.}
\end{abstract}
\section{Introduction}
The \textit{immutable} and \textit{decentralized} nature of blockchain has led to applications such as Bitcoin, decentralized crowdfunding, and cross-industry finance \cite{zheng2018blockchain}. While research often focuses on system and software issues like \textit{consensus mechanisms} \cite{yadav2023comparative}, \textit{smart contracts} \cite{TIFSFuzzing}, and \textit{virtual machines}, cybersecurity challenges, particularly malicious traffic attacks, are often neglected.

In blockchain-based power systems, critical national infrastructure, malicious traffic poses severe risks such as \textit{widespread power outages} and \textit{energy data breaches}. Although encryption protocols are widely adopted to secure data, they can be exploited by attackers to hide malicious activities.
Detecting malicious encrypted traffic in power systems involves distinguishing attack patterns from benign packets \cite{li2021survey}. Research in this area generally follows two approaches: one \cite{gupta2020categorical,chiba2019newest,dong2021mbtree} uses decryption analysis to reveal clues in encrypted sequences, while another \cite{ni2023high,qing2023low} examines statistical differences using deep learning. Current methods face two main issues: poor performance on blockchain-based power grids, leading to a significant drop in F1 scores, and difficulties with novel attack sequences due to reliance on static models. 

To address these challenges, we propose a novel approach incorporating incremental learning to adapt to new attacks. Our method involves training a model with self-supervised learning to extract detailed packet features and deriving incremental learning losses to preserve old data while learning new attack patterns. We update the model using replayed samples combined with these losses. Additionally, we introduce the GridET-2024 dataset, which includes real-world traffic data from the State Grid of China, to evaluate detection in blockchain-based power grid scenarios.
To evaluate our method, we conducted extensive experiments on three benchmark datasets, as well as ablation studies to assess key components. Our method demonstrates state-of-the-art performance on these datasets.
In summary, our contributions are as follows:
\begin{itemize}
    \item[$ \bullet $] We propose ETGuard, a novel framework for detecting malicious encrypted traffic in blockchain-based power grids. It is the first method to automatically identify these attacks and adapt to new traffic patterns incrementally. 
    \item[$ \bullet $] We derive a loss function for effective incremental learning, supported by rigorous theoretical analysis, to manage new attack patterns while mitigating catastrophic forgetting. Additionally, we introduce a sample buffer for efficient storage and replay of representative traffic samples. 
    \item[$ \bullet $] We have collected real-world data from blockchain-based power grid systems and created the GridET-2024 dataset, the first of its kind for detecting malicious encrypted traffic in this context. Our method achieves state-of-the-art performance on several benchmark datasets. 
\end{itemize}

\section{Problem Statement}

\textbf{Problem formulation.} 
Given a sequence of encrypted packets \( s = \{p_1, p_2,
\ldots, p_n\} \), 
we are interested in developing a fully automated model to determine whether the packet sequence is malicious. 
Put differently, we aim to estimate the label \(\hat{y}\) for each encrypted packet sequence \( s \), 
where \(\hat{y} = 1\) represents \( s \) is a malicious sequence, 
and \(\hat{y} = 0\) indicates that \( s \) is benign.

\section{Method}
\textbf{Method Overview}\quad
The detailed architecture of our proposed framework is outlined in Fig.~\ref{Fig2}. 
Overall, the framework consists of three key components:
\begin{itemize}
    \item[$ \bullet $] \textit{Data Preprocessing}: Raw packets are cleaned and processed, ensuring that the packets from an individual client are sorted into a packet sequence and are separated from the packets of other clients. 
    To extract features from these sequences, we use an unsupervised auto-encoder with stacked bi-GRUs. 
    \item[$ \bullet $] \textit{Incremental Learning Module}: 
    To adapt to novel attacks while preventing catastrophic forgetting, we introduce an incremental learning module and mathematically derive the incremental learning losses. 
    
    \item[$ \bullet $] \textit{Detection Module}: The detection module learns the feature distinctions between benign and malicious sequences to continuously identify potential attacks. 
    The learning process is supervised by the classification loss and incremental learning losses. 
\end{itemize}
In what follows, we will elaborate on the details of these components one by one.

\begin{figure*}
    \vspace{-.5em}  
    \centering
    \includegraphics[width=1\textwidth]{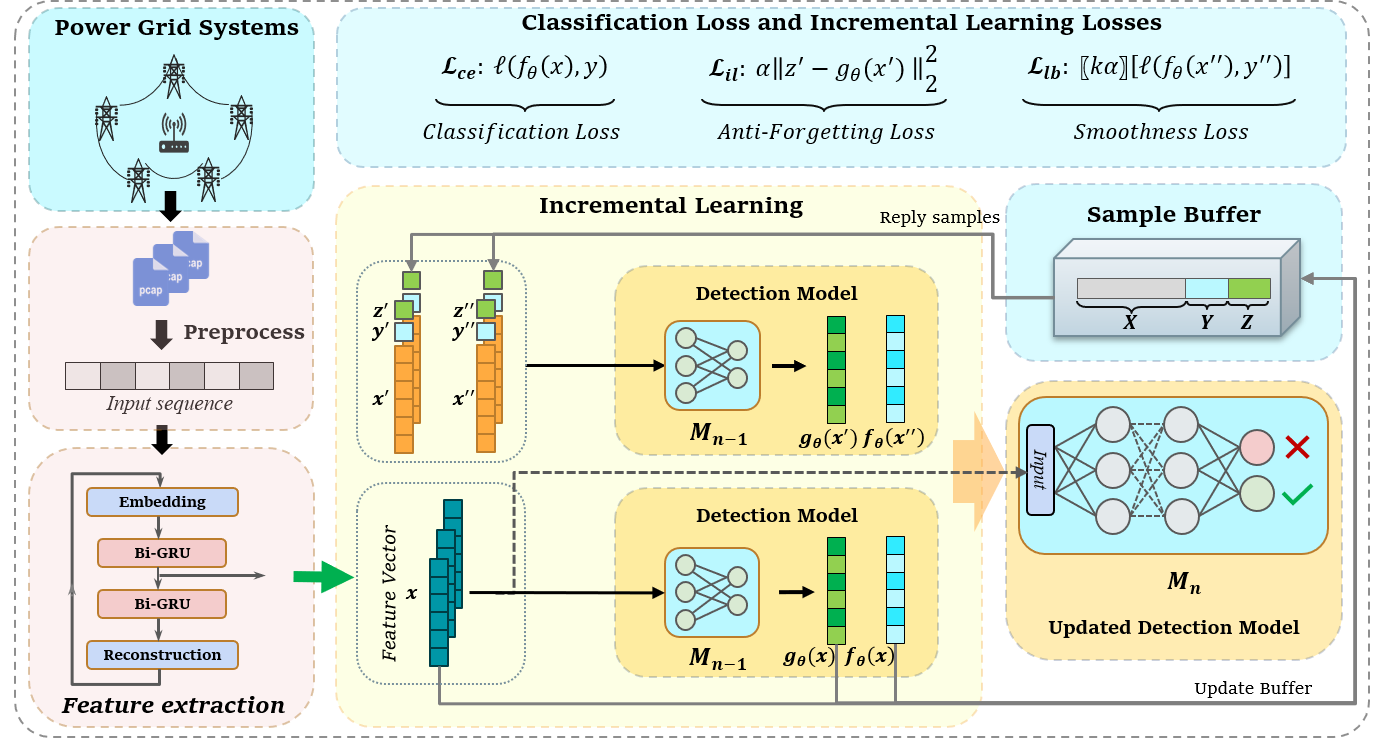}
    \caption{The framework of ETGuard.} 
    \label{Fig2}
    \vspace{-1.5em}
\end{figure*}
\subsection{Data Preprocessing}

The preprocessing module aims to clean and process raw packet data into distinct sequences for different clients. Since packet sequences cannot be directly input into a network, we employ an unsupervised auto-encoder with stacked bi-GRUs to extract features from each client's packet sequence.

Specifically, raw traffic data consists of packets organized by a five-tuple (\textit{i.e.}, source and destination IP addresses, source and destination ports, and transport layer protocol). We group packets with the same five-tuple into sequences $s$ and sort them chronologically. Irrelevant packet information, such as IP addresses and port numbers, is removed from each sequence.

The resulting sequences, $s_{input}$, are formatted with components $l$, $d$, and $t_m$, where: $l=\{b_1, b_2, \ldots, b_n\}$ denotes the packet length sequence, $d = t_n - t_1$ represents the duration of $s_{input}$, $t_m$ is the mean time interval between packets.

The feature extractor uses an auto-encoder with multiple bi-directional Gated Recurrent Units (bi-GRUs) \cite{chung2014empirical}, serving as both encoder and decoder. The encoder transforms the input sequence into a feature vector, which the decoder uses to reconstruct the sequence. A multi-layer perceptron in the reconstruction layer restores the original embedded sequence. The encoder learns an accurate representation of encrypted network traffic by minimizing reconstruction loss during training.

Compared to traditional methods \cite{anderson2016identifying,anderson2017machine,liu2018mampf} that focus on specific versions of TLS handshake metadata or message types \cite{qing2023low}, our approach captures the fine-grained behavior of network traffic more effectively.

\subsection{Mathematical Derivation of Incremental Learning Objectives}
In this subsection, we introduce the incremental learning module, and provide the key mathematical derivations of incremental learning objectives within this module.

We use the incremental learning method based on empirical replay \cite{rebuffi2017icarl,buzzega2020dark,van2020brain} with targeted modifications for encrypted traffic scenarios 
to realize the incremental update of the model when facing new types of traffic attack.
Specifically, we maintain a sample buffer to store representative traffic samples.
When new malicious encrypted traffic arises, we incrementally update the model through a anti-forgetting loss function to detect new attack patterns. 
The new traffic samples are updated into the buffer using a reservoir sampling algorithm \cite{vitter1985random}, serving as a representative sample set for subsequent incremental model updates. 

\subsubsection{Sample Buffer}
The sample buffer is used to store representative traffic samples to achieve experience replay during incremental learning.
When the model learns new traffic patterns, it replays previous samples from the sample buffer to effectively prevent catastrophic forgetting. 
We implement the update of the sample buffer using the reservoir sampling algorithm.
When new traffic samples arrive, the sample buffer is dynamically updated to ensure diversity and representativeness.

\subsubsection{Incremental Learning Loss Function}
The goal of the incremental learning module is to detect the ongoing emergence of new malicious traffic attacks. To enable the model to learn from new datasets, the loss function and optimization objectives are defined as follows:

\begin{equation}
    \min_{\theta}\mathcal L_{ce}=\mathbb{E}_{(x,y)\sim D}[\ell(f_{\theta}(x),y)], 
\end{equation}
where $\theta$ denotes the model parameters, $\mathcal L_{ce}$ represents the loss function using cross-entropy, 
$D$ denotes the dataset, $f_\theta(x)$ denotes the detection model, and $y$ denotes the true target of the sample.

To mitigate catastrophic forgetting, we introduce a new loss function aimed at balancing the learning between new and old data. We use the past data on the updated model to obtain an output $f_\theta(x)$ that closely approximates the output of the pre-update model trained on past traffic data, denoted as $f_{\theta*t}$.

\begin{equation}
    \mathcal L_{il}=\alpha \mathbb{E}_{(x,z)\sim \mathcal{M}} \left[ D_{KL} (\text{softmax}(z) || f_{\theta}(x)) \right].
\end{equation}

In the above equation, $z$ represents the output logits of the model on the past traffic data. 
In order to reduce the resource consumption, we use $z$ to replace the model output $f_{\theta*t}$. 
The logits $z$ are softmaxed to obtain the probability vector, which is used to calculate the KL scatter with the model output $f_\theta(x)$.

\begin{theorem}
\textit{If two logits output by the model are similar, the KL divergence between them can be approximated as the Euclidean distance.}
\label{theorem1}
\end{theorem}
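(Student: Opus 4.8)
The plan is to read the claim as a second-order Taylor expansion of the KL divergence about the point where the two softmax distributions coincide. Writing the two logit vectors as $z^{(1)}, z^{(2)}$ and their softmaxed probability vectors as $p = \mathrm{softmax}(z^{(1)})$ and $q = \mathrm{softmax}(z^{(2)})$, I would encode the hypothesis that the logits are ``similar'' as $q_i = p_i + \epsilon_i$ with each $\epsilon_i$ small. The goal is then to show that $D_{KL}(p\|q)$ reduces, to leading order, to a squared-norm expression in $\epsilon$, which is the precise sense in which it becomes a Euclidean distance.

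First I would expand the summand. Since $D_{KL}(p\|q) = \sum_i p_i \log(p_i/q_i) = -\sum_i p_i \log(1 + \epsilon_i/p_i)$, I would use $\log(1+u) = u - \tfrac{1}{2}u^2 + O(u^3)$ with $u = \epsilon_i/p_i$. Collecting terms gives $D_{KL}(p\|q) = -\sum_i \epsilon_i + \tfrac{1}{2}\sum_i \epsilon_i^2/p_i + O(\|\epsilon\|^3)$. The linear term vanishes identically because both $p$ and $q$ are normalized, so $\sum_i \epsilon_i = \sum_i q_i - \sum_i p_i = 0$; this is the key simplification that forces the leading behaviour to be quadratic rather than linear. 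The surviving term $\tfrac{1}{2}\sum_i \epsilon_i^2/p_i$ is exactly the Fisher-information (equivalently $\chi^2$) quadratic form, i.e. a weighted squared Euclidean distance between $p$ and $q$.

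The remaining step, and the one I expect to be the main obstacle, is to pass from this weighted quadratic form to the plain Euclidean distance asserted in the statement. I would argue that in the near-coincident regime the weights $1/p_i$ are essentially frozen at a common value, so that $\tfrac{1}{2}\sum_i \epsilon_i^2/p_i \approx c\,\sum_i \epsilon_i^2 = c\,\|p - q\|_2^2$ for a constant $c$ (equivalently, assuming the class probabilities are bounded away from $0$ and roughly uniform, so that the information metric is locally flat). This is precisely where care is needed: the identification with the unweighted Euclidean norm holds only up to this constant and under the boundedness assumption, and one must additionally confirm that the discarded $O(\|\epsilon\|^3)$ remainder is negligible relative to the retained quadratic term. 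Making the ``similar logits'' hypothesis quantitative enough to control both the variation of the weights and the Taylor remainder is the delicate part of the argument; the algebraic expansion itself is routine.
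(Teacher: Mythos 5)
Your proposal follows essentially the same route as the paper's proof: a second-order Taylor expansion of the KL divergence about the point where the two distributions coincide, with the leading term emerging as the Fisher-information ($\chi^2$) weighted quadratic form. The differences are in execution, and they favor you. First, you expand on the softmax probabilities $p,q$, which is the correct object; the paper writes $D_{KL}(z_1\|z_2)=\sum_i z_1(i)\log\left(z_1(i)/z_2(i)\right)$ directly on the raw logits, implicitly treating them as a positive, normalized vector (it must assume $z_1(i)>0$, and normalization is never stated). Second, you justify the vanishing of the first-order term via $\sum_i \epsilon_i = 0$, which holds precisely because $p$ and $q$ are both normalized; the paper silently discards this term, yet on unnormalized logits it would not vanish and would dominate the quadratic term, invalidating the whole approximation. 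Third, the step you flag as delicate --- passing from the weighted form $\tfrac{1}{2}\sum_i \epsilon_i^2/p_i$ to a plain Euclidean distance --- is exactly where the paper is weakest: it simply asserts $\tfrac{\epsilon^2}{2}\|\Delta z\|_{z_1}^2 + o(\epsilon^2) = \|z_1 - z_2\|$, an unjustified equality (which moreover equates a squared norm with an unsquared one). Your added hypothesis that the weights $1/p_i$ are roughly constant, i.e.\ the probabilities are bounded away from zero, is the assumption needed to make that identification honest up to a constant factor. In short, your proof is not a different argument but a strictly more careful rendition of the paper's, and the gap you identify at the end is a genuine gap in the paper's own proof rather than a defect of your approach.
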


\begin{proof}
Assume logits \( z_1 \) and logits \( z_2 \) are two close vectors, we set the sample space of logits $z$, $R=\{1,...,n\}$, $n$ is the dimension of logits $z$, so we can express \( z_2 \) as:
\begin{equation}
    z_2 = z_1 + \epsilon \Delta z, 
\end{equation}
for some small \( \epsilon > 0 \) and perturbation vector \( \Delta z = [\Delta z(1) \cdots \Delta z(n)]^T \). 
Next, we approximate the KL divergence locally.
\begin{equation}
    D_{KL}(z_1 \| z_2) = \sum_{i \in R} z_1(i) \log \left( \frac{z_1(i)}{z_2(i)} \right). 
\end{equation}
Substituting \( z_2(i) = z_1(i) + \epsilon \Delta z(i) \) into the KL divergence formula gives:
\begin{equation}
    D_{KL}(z_1 \| z_2) = \sum_{i \in R} z_1(i) \log \left( \frac{z_1(i)}{z_1(i) + \epsilon \Delta z(i)} \right). 
\end{equation}
Applying the second-order Taylor approximation to \( \log \left( \frac{z_1(i)}{z_1(i) + \epsilon \Delta z(i)} \right) \):
\begin{equation}
    D_{KL}(z_1 \| z_2) = \frac{\epsilon^2}{2} \sum_{i \in R} \frac{\Delta z(i)^2}{z_1(i)} + o(\epsilon^2). 
\end{equation}
This implicitly assumes \( z_1(i) > 0 \) for all \( i \in R \). 
Introducing the weighted Euclidean norm, the KL divergence becomes:
\begin{equation}
    D_{KL}(z_1 \| z_2) = \frac{\epsilon^2}{2} \| \Delta z \|_{z_1}^2 + o(\epsilon^2)=\|z_1 - z_2\|. 
\end{equation}
Thus, under the local approximation, the KL divergence approximates a weighted Euclidean distance.
\end{proof}

By Theorem~\ref{theorem1}, $\mathcal{L}_{il}$ can be simplified to: 
\begin{equation}
    \mathcal{L}_{il} = \alpha \mathbb{E}_{(x,z)\sim \mathcal{M}} \left[ \left\| z' - g(x') \right\|_2^2 \right]. 
\end{equation}

To handle significant changes in new attack patterns and avoid bias towards previous patterns, we introduce a smoothness loss function:
\begin{equation}
    \mathcal{L}_{lb} = \beta \mathbb{E}_{(x'', y'', z'') \sim \mathcal{M}} \left[ \ell(y'', h(x'')) \right].
\end{equation}

To balance the contributions of $\mathcal{L}_{il}$ and $\mathcal{L}_{lb}$, we use a coefficient $k$:
\begin{equation}
    k = 0.5 + \text{softmax}(\mathcal{L}_{il} \cdot \gamma).
\end{equation}

The final loss function is:
\begin{equation}
    \mathcal{L}_{ce} + \alpha \mathbb{E}_{(x', y', z') \sim \mathcal{M}} \left[ \left\| z' - g(x') \right\|_2^2 \right] + k \alpha \mathbb{E}_{(x'', y'', z'') \sim \mathcal{M}} \left[ \ell(y'', h(x'')) \right].
\end{equation}

\subsection{Detection Module}
Due to the substantial volume of traffic data and the high traffic rate in the blockchain-based power grid scenario, the real-time performance and resource consumption of the model are critically demanding. The MLP model architecture, being relatively simple, requires lower computing resources and offers faster training speeds. It is capable of monitoring traffic data in real time, and experiments have demonstrated that the MLP model is sufficient to meet the task requirements. Therefore, the MLP model is chosen to detect malicious traffic.

\section{Evaluations}
In this section, we conduct extensive experiments on multiple malicious encrypted traffic detection datasets to evaluate our framework. 
Next, we introduce the experimental setup, followed by presenting the comprehensive empirical results.
\subsection{Experimental Setup}
\subsubsection{Datasets}
\begin{itemize}
    \item[$ \bullet $] \textbf{CIRA-CIC-DoHBrw-2020 (DoHBrw) \cite{montazerishatoori2020detection}:}
    The DoHBrw dataset provides a mix of benign and malicious DNS-over-HTTPS (DoH) traffic, all data is encrypted traffic.
    The normal traffic is generated by querying benign DNS servers using the DoH  protocol.
    Tunneling tools such as dns2tcp, DNSCat2, and Iodine are used to generate malicious DoH traffic.
    \item[$ \bullet $] \textbf{CIC-AndMal2017 (CIC) \cite{lashkari2018toward}:}
    CIC collected a rich variety of malicious attacks from several sources.
    The malicious traffic samples come from 42 unique malware families, 
    which can be classified into four categories: Adware, Ransomware, Scareware, and SMS Malware.
    \item[$ \bullet $] \textbf{GridET-2024 (GridET):}
    To better detect the encrypted attacks of real-world blockchain-based power grid scenario, we create the dataset GridET-2024.
    Benign traffic samples are collected by capturing power grid system interaction traffic data. 
    Malicious traffic data samples are sourced from malware-traffic-analysis.net and USTC-TFC2016 dataset to ensure the diversity of malicious traffic attack patterns.
\end{itemize}

\subsubsection{Implementation Details}
We implement our detection framework and all baselines by using Python 3.8.5.
We run these models on a Linux server with NVIDIA GeForce RTX 3090 GPU.
We list all the parameters used by our framework in Table \ref{tab2}. 
In particular, we set $n$ = 50 and $d$ = 32 to make a better capability of 
capturing fine-grained behaviors of network traffic. 
Then, we apply Grid Search to find the appropriate values for $\alpha$ and $\gamma$. 
The values of $\alpha$ and $\gamma$ are 0.5 and 10, respectively. 
In particular, buffer size is a hyperparameter manually tuned to best fit the specific scenario.
We conduct comprehensive experiments to evaluate the performance of ETGuard with various buffer sizes.

\begin{table}[h]
    \centering
    \begin{minipage}{0.40\textwidth}
        \centering
        \vspace{-3.5em}
        \caption{Statistics of Datasets}
        \begin{tabular}{l|cc}
            \toprule
            Dataset & Normal & Malicious \\
            \midrule
            DoHBrw & 688,489 & 6,112 \\
            CIC & 894,367 & 62,972 \\
            GridET & 43,611 & 27,141 \\
            \bottomrule
        \end{tabular}
        \label{tab1}
        \vspace{-3.5em}
    \end{minipage}%
    \hfill
    \begin{minipage}{0.6\textwidth} 
        \centering
        \vspace{-2em}
        \caption{Parameter Settings of ETGuard}
        \resizebox{\columnwidth}{!}{
        \begin{tabular}{lcccc}
            \toprule
            Module & Para. & Value & Description \\
            \midrule
            \multirow{4}{*}{Feature Extraction} 
            & $n$ & 50 & Number of used head packets \\
            & $V$ & 32 & Embedding size of GRU-AE \\
            & $H$ & 8 & Hidden size of each GRU layer \\
            & $B$ & 2 & Number of GRU layers \\
            \midrule
            \multirow{2}{*}{Incremental Learning} 
            & $\alpha$ & 0.5 & Coefficient of loss $L_2 $ \\
            & $\gamma$ & 10 & Coefficient to balance loss $L_2$ and loss $L_3$\\ 
            \bottomrule
        \end{tabular}
        }
        \label{tab2}
        \vspace{-2em}
    \end{minipage}
\end{table}

\subsubsection{Evaluation Metrics}
We use \textit{ACC} and \textit{F1 Score} as metrics to evaluate the malicious traffic detection and incremental learning performance of ETGuard. 

\subsection{Performance on Malicious Encrypted Traffic Detection}
In this section, we benchmark our method against state-of-the-art malicious encrypted traffic detection methods 
for two public dataset and one power grad scenario dataset GridET.

In public dataset evaluations, we train and test methods on DoHBrw, and CIC, respectively. 
Fig.~\ref{detec1} presents public dataset comparison results. 
From Fig.~\ref{detec1}, we observe that our method is capable of consistently outperforming existing methods on all five benchmarks. 
For example, the F1 score of our method is 0.92 on DoHBrw while the state-of-the-art detection method RAPIER \cite{qing2023low} is 0.88. 
The F1 score of our method is also outperformed FS \cite{liu2019fs} in all datasets.
In addition, we also use the CoinFlip algorithm and PacketLen algorithm, which simulate randomly guess and only utilize packet length, respectively, to detect encrypted traffic.

\begin{table}[htbp]
    \centering
    \vspace{-.5cm}
    \caption{F1 scores of Malicious Encrypted Traffic Detection Methods}
    \begin{tabular}{lccccc>{\centering\arraybackslash}p{5cm}}
        \toprule
        \textbf{Dataset} & \textbf{RAPIER} & \textbf{FS} & \textbf{CoinFlip} & \textbf{PacketLen} & \textbf{ETGuard (Ours)} \\ \midrule
        DoHBrw           & 0.88            & 0.76        & 0.28              & 0.56               & \textbf{0.92}           \\
        CIC              & 0.84            & 0.71        & 0.27              & 0.46               & \textbf{0.86}           \\
        GridET           & 0.83            & 0.73        & 0.31              & 0.49               & \textbf{0.94}           \\ \bottomrule
    \end{tabular}
    \label{111}
    \vspace{-0.5cm}
\end{table}

The blockchain-based power grid scenario malicious traffic detection is more challenging for existed detection methods. 
To evaluate the detection abilities of the methods on this scenario, we train and test the models on the GridET dataset. 
Table \ref{111} demonstrate the state-of-the-art malicious encrypted traffic detection methods still suffer from relatively low F1 score on the GridET dataset, 
which reveals that such methods are fall short to extract the critical features of traffic sample in the blockchain-based power grid scenario.

\begin{figure*}[htbp]
    \vspace{-1em} 
    \centering
    \includegraphics[width=0.7\textwidth]{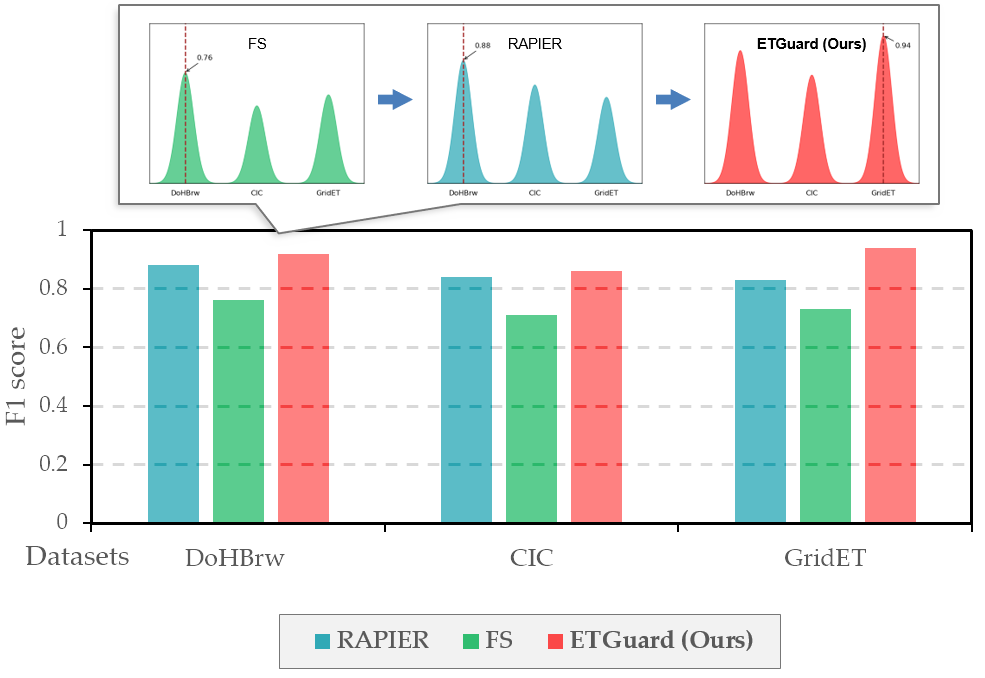}
    \caption{Performance on Malicious Encrypted Traffic Detection. } 
    \label{detec1}
    \vspace{-1.5em}
\end{figure*}

Overall, our method achieves state-of-the-art general scenarios and blockchain-based power grid scenario malicious encrypted traffic detection performance. 
For general scenarios comparisons, our method improves the F1 score on DoHBrw from 0.88 to 0.92, and on CIC from 0.84 to 0.86.
In contrast with general scenarios methods, our method also attains 0.94 F1 score on GridET, outperforming the current state-of-the-art method RAPIER. 

\subsection{Performance on Incremental Learning}
To evaluate the performance of our method on incremental learning, we create a new dataset \textit{DoHBrw/CIC}.
Specifically, we combine all benign samples from the DoHBrw dataset with a selection of malicious samples from the CIC dataset.
We further divide the datasets into six sub-datasets $\{A_0,A_1,A_2,A_3,A_4,A_5\}$.
In each of these sub-datasets, the benign traffic is all of the same type DoHBrw, 
while the malicious traffic all consists of different types of malicious attacks. 
We use $A_0$ for pre-training the model, while the other sub-datasets are used for incremental updates to the model. 
We established a separate test set for each round, where the test set for round i includes all attack types observed from rounds 0 to i. 

We compare ETGuard against five incremental learning methods (ER \cite{riemer2018learning}, DER \cite{buzzega2020dark}, DER++ \cite{buzzega2020dark}, GSS \cite{aljundi2019gradient}, SI \cite{zenke2017continual}) on DoHBrw/CIC. 
To assess the efficacy of the incremental learning module within our approach, we extracted this component from ETGuard, resulting in a variant dubbed ETGuard-V.
We then evaluated the performance of ETGuard-V to conduct an ablation study.
We further provide an upper bound given by training all attack samples (FULL).

\begin{figure}[H]
    \centering
    \vspace{-1cm}
    \includegraphics[width=.623\linewidth]{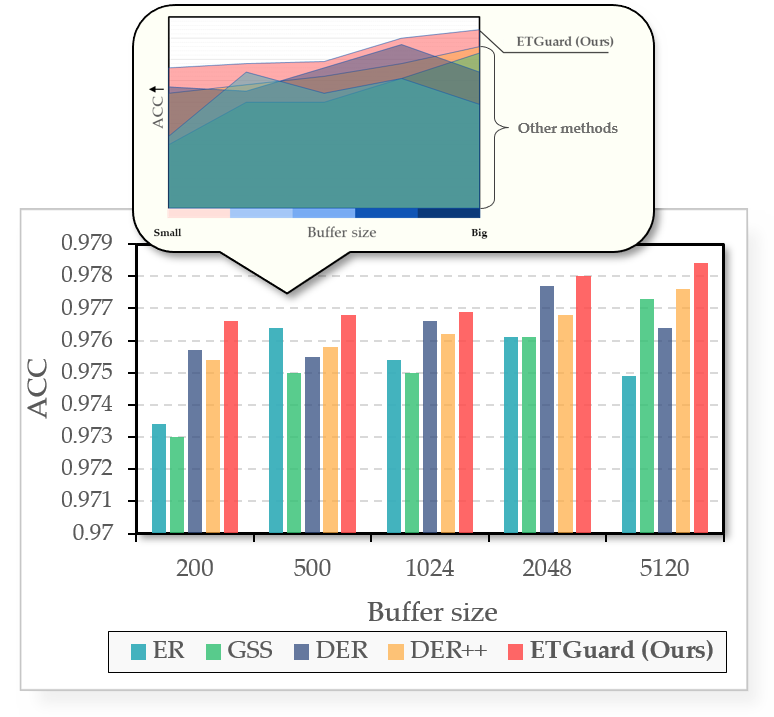}
    \caption{The Performance of Incremental Learning Methods.}
    \label{incre1}
    \vspace{-0.21cm}
\end{figure}

Fig.~\ref{incre1} reports performance in terms of average accuracy across all rounds. 
Experimental evidence ETGuard achieve state-of-the-art performance in almost all settings.

\begin{figure}[H]
    \centering
    \vspace{-.5cm}
    \includegraphics[width=.7\linewidth]{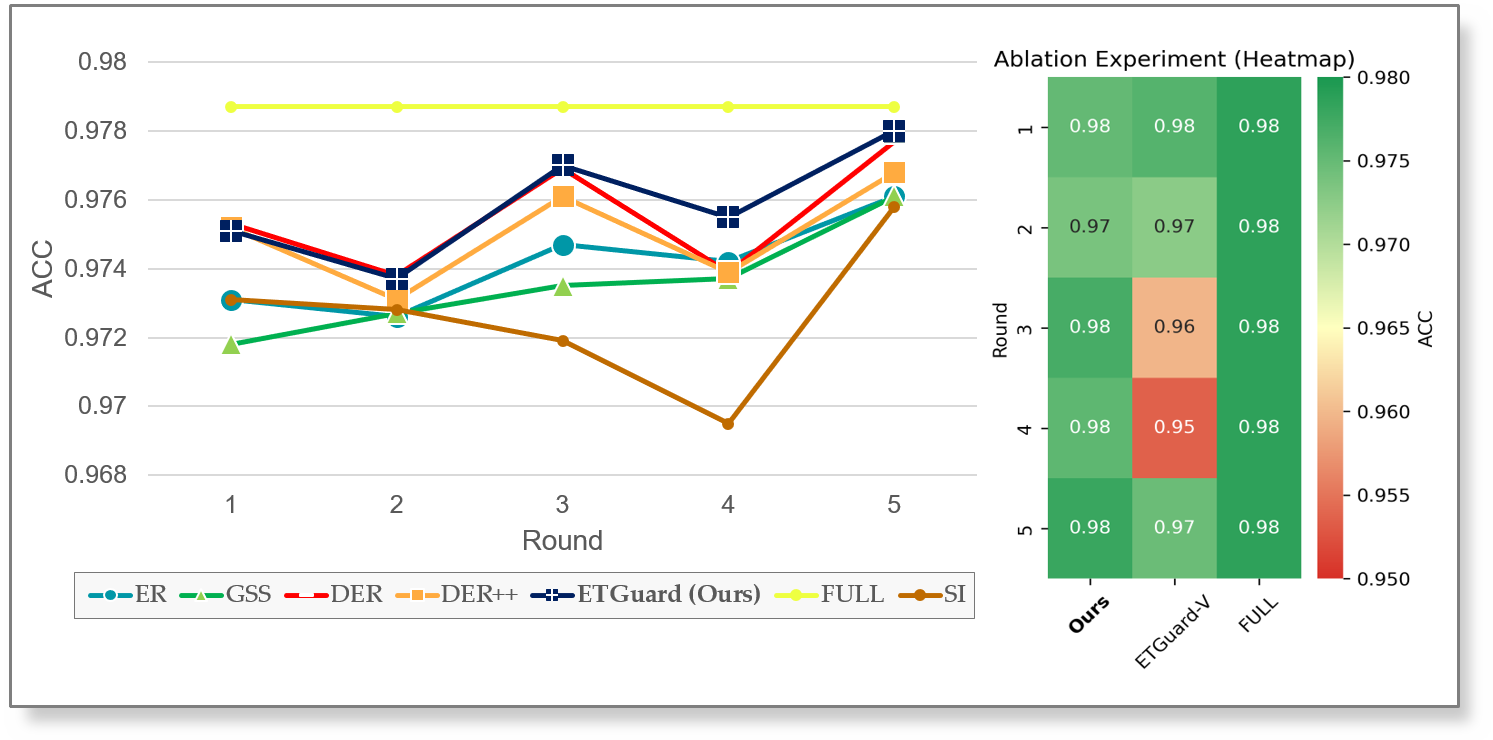}
    \caption{The Performance of Incremental Learning Methods in Different Round.}
    \label{incre2}
    \vspace{-.8cm}
\end{figure}

At the same time, we observe that the performance of ETGuard is almost always better than that of ETGuard-V. 
And as the number of rounds increases, the gap in detection performance between ETGuard-V and ETGuard gradually widens, 
which further proves the effectiveness of our incremental learning module.

\section{Related Work}
\subsection{Malicious Encrypted Traffic Detection}
Traditional malicious encrypted traffic detection mainly uses signature-based methods \cite{gupta2020categorical,chiba2019newest,dong2021mbtree} to detect malicious encrypted traffic. However, the method relies heavily on the quality of decryption operations and rules for traffic \cite{khraisat2019survey,azeez2020intrusion}. 
With the growth of artificial intelligence technology\cite{ACMMMdeepfake,IJCAIvideoGeneration}, machine learning is increasingly being adopted for detecting malicious encrypted traffic. 
Machine learning enhances detection by extracting statistical features from traffic, offering faster and more accurate results compared to traditional methods. 
For example, Fu et al, utilized frequency domain features for real-time detection \cite{fu2021realtime}. Barradas et al, detect attacks by applying random forests \cite{barradas2021flowlens}.
In addition to traditional traffic detection or packet inspection \cite{mirsky2018kitsune,anderson2017machine}, 
Fang et al. \cite{fang2021communication}, on the other hand, detects TLS traffic by collecting features of the traffic communication channel (packets consisting of the same destination IP and destination port) 
and uses Random Forest (RF) to enhance malware traffic detection performance.
All these methods are not effective in detecting attacks on new encrypted traffic. 

\subsection{Incremental Learning}
The core challenge of incremental learning is to balance the conflict between remembering information about old tasks and absorbing information about new tasks, the so-called catastrophic forgetting problem. To overcome this problem, existing methods fall into two main categories: replay-based methods and parameter optimization-based methods.
Replay-based methods mitigate Catastrophic Forgetting by replaying some samples of old tasks while learning new ones. The replayed samples can be real historical data, \textit{i.e.}, empirical replay. It can also be pseudo-samples generated by generative models (\textit{e.g.}, Generative Adversarial Networks, GAN), \textit{i.e.}, generative replay. iCaRL \cite{rebuffi2017icarl} is a representative of the empirical replay-based approach, which combines knowledge distillation methods to update the model parameters on a representative sample pool. However, iCaRL updates the parameters of old tasks and therefore suffers from overfitting to old data. 

\section{Conclusion}
In this paper, we try to tackle the malicious encrypted traffic detection problem from two aspects: 
(1) We propose a novel framework termed ETGuard, which to our knowledge is the first approach tailored for automatically identifying malicious traffic attacks in blockchain-based power grid systems. 
(2) We lay the mathematical foundation for establishing an incremental learning model that can 
effectively adapt to new types of attacks. 
We utilized real data collected from the State Grid and constructed the malicious encrypted traffic dataset GridET.
We extensively evaluated the proposed method on three benchmark datasets. 
Empirical results show that our method consistently delivers state-of-the-art performance on malicious encrypted traffic detection across general scenarios and the blockchain-based power grid scenario. 

%
%
\begin{credits}
\subsubsection{\ackname} This work was supported by State Grid Zhejiang Electric Power Company, LTD. Information and Communication Branch, China (Grant number 5211XT
24000D). 
\end{credits}


\bibliography{reference}

\begin{thebibliography}{10}
\providecommand{\url}[1]{\texttt{#1}}
\providecommand{\urlprefix}{URL }
\providecommand{\doi}[1]{https://doi.org/#1}

\bibitem{aljundi2019gradient}
Aljundi, R., Lin, M., Goujaud, B., Bengio, Y.: Gradient based sample selection for online continual learning. Advances in neural information processing systems  \textbf{32} (2019)

\bibitem{anderson2016identifying}
Anderson, B., McGrew, D.: Identifying encrypted malware traffic with contextual flow data. In: Proceedings of the 2016 ACM workshop on artificial intelligence and security. pp. 35--46 (2016)

\bibitem{anderson2017machine}
Anderson, B., McGrew, D.: Machine learning for encrypted malware traffic classification: accounting for noisy labels and non-stationarity. In: Proceedings of the 23rd ACM SIGKDD International Conference on knowledge discovery and data mining. pp. 1723--1732 (2017)

\bibitem{azeez2020intrusion}
Azeez, N.A., Bada, T.M., Misra, S., Adewumi, A., Van~der Vyver, C., Ahuja, R.: Intrusion detection and prevention systems: an updated review. Data Management, Analytics and Innovation: Proceedings of ICDMAI 2019, Volume 1 pp. 685--696 (2020)

\bibitem{barradas2021flowlens}
Barradas, D., Santos, N., Rodrigues, L., Signorello, S., Ramos, F.M., Madeira, A.: Flowlens: Enabling efficient flow classification for ml-based network security applications. In: NDSS (2021)

\bibitem{buzzega2020dark}
Buzzega, P., Boschini, M., Porrello, A., Abati, D., Calderara, S.: Dark experience for general continual learning: a strong, simple baseline. Advances in neural information processing systems  \textbf{33},  15920--15930 (2020)

\bibitem{chiba2019newest}
Chiba, Z., Abghour, N., Moussaid, K., Omri, A.E., Rida, M.: Newest collaborative and hybrid network intrusion detection framework based on suricata and isolation forest algorithm. In: Proceedings of the 4th international conference on smart city applications. pp. 1--11 (2019)

\bibitem{chung2014empirical}
Chung, J., Gulcehre, C., Cho, K., Bengio, Y.: Empirical evaluation of gated recurrent neural networks on sequence modeling. arXiv preprint arXiv:1412.3555  (2014)

\bibitem{dong2021mbtree}
Dong, C., Lu, Z., Cui, Z., Liu, B., Chen, K.: Mbtree: Detecting encryption rats communication using malicious behavior tree. IEEE Transactions on Information Forensics and Security  \textbf{16},  3589--3603 (2021)

\bibitem{fang2021communication}
Fang, Y., Li, K., Zheng, R., Liao, S., Wang, Y.: A communication-channel-based method for detecting deeply camouflaged malicious traffic. Computer Networks  \textbf{197},  108297 (2021)

\bibitem{fu2021realtime}
Fu, C., Li, Q., Shen, M., Xu, K.: Realtime robust malicious traffic detection via frequency domain analysis. In: Proceedings of the 2021 ACM SIGSAC Conference on Computer and Communications Security. pp. 3431--3446 (2021)

\bibitem{gupta2020categorical}
Gupta, A., Sharma, L.S.: A categorical survey of state-of-the-art intrusion detection system-snort. International Journal of Information and Computer Security  \textbf{13}(3-4),  337--356 (2020)

\bibitem{khraisat2019survey}
Khraisat, A., Gondal, I., Vamplew, P., Kamruzzaman, J.: Survey of intrusion detection systems: techniques, datasets and challenges. Cybersecurity  \textbf{2}(1),  1--22 (2019)

\bibitem{lashkari2018toward}
Lashkari, A.H., Kadir, A.F.A., Taheri, L., Ghorbani, A.A.: Toward developing a systematic approach to generate benchmark android malware datasets and classification. In: 2018 International Carnahan conference on security technology (ICCST). pp.~1--7. IEEE (2018)

\bibitem{li2021survey}
Li, Y., Guo, H., Hou, J., Zhang, Z., Jiang, T., Liu, Z.: A survey of encrypted malicious traffic detection. In: 2021 International Conference on Communications, Computing, Cybersecurity, and Informatics (CCCI). pp.~1--7. IEEE (2021)

\bibitem{liu2018mampf}
Liu, C., Cao, Z., Xiong, G., Gou, G., Yiu, S.M., He, L.: Mampf: Encrypted traffic classification based on multi-attribute markov probability fingerprints. In: 2018 IEEE/ACM 26th International Symposium on Quality of Service (IWQoS). pp. 1--10. IEEE (2018)

\bibitem{liu2019fs}
Liu, C., He, L., Xiong, G., Cao, Z., Li, Z.: Fs-net: A flow sequence network for encrypted traffic classification. In: IEEE INFOCOM 2019-IEEE Conference On Computer Communications. pp. 1171--1179. IEEE (2019)

\bibitem{TIFSFuzzing}
Liu, Z., Qian, P., Yang, J., Liu, L., Xu, X., He, Q., Zhang, X.: Rethinking smart contract fuzzing: Fuzzing with invocation ordering and important branch revisiting. IEEE Transactions on Information Forensics and Security (TIFS)  \textbf{18},  1237--1251 (2023). \doi{10.1109/TIFS.2023.3237370}

\bibitem{IJCAIvideoGeneration}
Liu, Z., Wu, S., Xu, C., Wang, X., Zhu, L., Wu, S., Feng, F.: Copy motion from one to another: Fake motion video generation. In: IJCAI. pp. 1223--1231 (2022). \doi{10.24963/IJCAI.2022/171}

\bibitem{mirsky2018kitsune}
Mirsky, Y., Doitshman, T., Elovici, Y., Shabtai, A.: Kitsune: an ensemble of autoencoders for online network intrusion detection. arXiv preprint arXiv:1802.09089  (2018)

\bibitem{montazerishatoori2020detection}
MontazeriShatoori, M., Davidson, L., Kaur, G., Lashkari, A.H.: Detection of doh tunnels using time-series classification of encrypted traffic. In: 2020 IEEE Intl Conf on Dependable, Autonomic and Secure Computing, Intl Conf on Pervasive Intelligence and Computing, Intl Conf on Cloud and Big Data Computing, Intl Conf on Cyber Science and Technology Congress (DASC/PiCom/CBDCom/CyberSciTech). pp. 63--70. IEEE (2020)

\bibitem{ni2023high}
Ni, J., Chen, W., Tong, J., Wang, H., Wu, L.: High-speed anomaly traffic detection based on staged frequency domain features. Journal of Information Security and Applications  \textbf{77},  103575 (2023)

\bibitem{qing2023low}
Qing, Y., Yin, Q., Deng, X., Chen, Y., Liu, Z., Sun, K., Xu, K., Zhang, J., Li, Q.: Low-quality training data only? a robust framework for detecting encrypted malicious network traffic. arXiv preprint arXiv:2309.04798  (2023)

\bibitem{rebuffi2017icarl}
Rebuffi, S.A., Kolesnikov, A., Sperl, G., Lampert, C.H.: icarl: Incremental classifier and representation learning. In: Proceedings of the IEEE conference on Computer Vision and Pattern Recognition. pp. 2001--2010 (2017)

\bibitem{riemer2018learning}
Riemer, M., Cases, I., Ajemian, R., Liu, M., Rish, I., Tu, Y., Tesauro, G.: Learning to learn without forgetting by maximizing transfer and minimizing interference. arXiv preprint arXiv:1810.11910  (2018)

\bibitem{ACMMMdeepfake}
Shuai, C., Zhong, J., Wu, S., Lin, F., Wang, Z., Ba, Z., Liu, Z., Cavallaro, L., Ren, K.: Locate and verify: {A} two-stream network for improved deepfake detection. In: ACM MM. pp. 7131--7142 (2023). \doi{10.1145/3581783.3612386}

\bibitem{van2020brain}
Van~de Ven, G.M., Siegelmann, H.T., Tolias, A.S.: Brain-inspired replay for continual learning with artificial neural networks. Nature communications  \textbf{11}(1), ~4069 (2020)

\bibitem{vitter1985random}
Vitter, J.S.: Random sampling with a reservoir. ACM Transactions on Mathematical Software (TOMS)  \textbf{11}(1),  37--57 (1985)

\bibitem{yadav2023comparative}
Yadav, A.K., Singh, K., Amin, A.H., Almutairi, L., Alsenani, T.R., Ahmadian, A.: A comparative study on consensus mechanism with security threats and future scopes: Blockchain. Computer Communications  \textbf{201},  102--115 (2023)

\bibitem{zenke2017continual}
Zenke, F., Poole, B., Ganguli, S.: Continual learning through synaptic intelligence. In: International conference on machine learning. pp. 3987--3995. PMLR (2017)

\bibitem{zheng2018blockchain}
Zheng, Z., Xie, S., Dai, H.N., Chen, X., Wang, H.: Blockchain challenges and opportunities: A survey. International journal of web and grid services  \textbf{14}(4),  352--375 (2018)

\end{thebibliography}
\end{document}